\newtheorem{lem}{Lemma}
\newtheorem{thm}{Theorem}
\newtheorem{mydef}{Definition}
\tikzset{
  treenode/.style = {align=center, inner sep=0pt, text centered,
    font=\sffamily},
  arn_r/.style = {treenode, circle, black, draw=black, 
    text width=1.5em, very thick},
}
\begin{document}
\title{Low Complexity Secure Code (LCSC) Design for Big Data in Cloud Storage Systems}
\author{Mohsen~Karimzadeh~Kiskani$^{\dag}$,
         Hamid~R.~Sadjadpour$^{\dag}$
\thanks{M. K. Kiskani$^{\dag}$ and H. R. Sadjadpour$^{\dag}$ 
are with the Department of Electrical Engineering, University of California, Santa Cruz. Email: 
\{mohsen, hamid\}@soe.ucsc.edu.}, Mohammad~Reza~Rahimi$^{\ddag}$ and Fred Etemadieh$^{\ddag}$
\thanks{M. R. Rahimi$^{\ddag}$ and Fred Etemadieh$^{\ddag}$ are with Futurewei Technologies, Santa Clara, CA. Email: \{reza.rahimi, fred.etemadieh\}@huawei.com.}}
\maketitle \thispagestyle{empty}

\begin{abstract}
 In the era of big data, reducing the computational complexity of servers in data centers will be an  important goal. We propose Low Complexity Secure Codes (LCSCs) that are specifically designed to provide information theoretic security in cloud distributed storage systems. Unlike traditional coding schemes that are designed for error correction capabilities, these codes are only designed to provide security with low decoding complexity.  
These sparse codes are able to provide (asymptotic) perfect secrecy similar to  Shannon cipher. The simultaneous promise of low decoding complexity and perfect secrecy make these codes very desirable for cloud storage systems with large amount of data. The design is particularly suitable for large size archival data such as movies and pictures. The complexity of these codes are compared with traditional encryption techniques.
\end{abstract}

\begin{IEEEkeywords}
Distributed Cloud Storage Systems, Information Theoretic Security, Big Data
\end{IEEEkeywords}

\IEEEpeerreviewmaketitle

\section{Introduction}
In the era of big data, it is becoming increasingly inefficient to apply traditional cryptographical algorithms to securely store data. The traditional methods of providing security require significant computational power and are not well optimized for big data applications.  

For instance, Hypertext Transfer Protocol Secure (HTTPS) protocol which is the backbone of internet security uses the Transport Layer Security (TLS) protocol stack in Transmission Control Protocol / Internet Protocol (TCP/IP) for secure and private data transfer. TLS is a protocol suite that uses a myriad of other protocols to guarantee security. Many of these sub-protocols consume a lot of CPU power and are complex processes which are not optimized for big data applications. 

For instance, TLS uses {\em public-key cryptography} paradigms to exchange the keys between the communicating parties through the {\em TLS handshake protocol}. One of the well-known key-exchange algorithms that is used in TLS handshaking protocol is the RSA algorithm. With the typical modular exponentiation algorithms used to implement the RSA algorithm, public-key operations take quadratic computations, private-key operations take cubic computations and key generation takes quartic computations with respect to the number of bits in the modulus. After the key-exchange, TLS uses {\em TLS record protocol} and algorithms like Advanced Encryption Standard (AES), which is the current adopted algorithm by the U.S. National Institute of Standards and Technology (NIST) for the encryption of electronic data, to form block ciphers for fixed block sizes of 128 bits. Each 128 bits of data therefore needs to undergo AES calculations to be transformed to ciphertexts. Albeit all the effort to enhance the performance of such algorithms, todays secure cryptographic protocols are not well suited in big data applications as they need to perform a significant number of computations. Such unnecessary CPU processing time and power renders cloud service providers to spend significant resources to maintain their secure cloud services.

On the other hand, most of current cryptographic algorithms are based on {\em computational security} paradigms. In computational security, it is inherently assumed that the {\em man-in-the-middle} is unable to perform complex computations and it does not have infinite processing time for cryptanalysis. Such algorithms are vulnerable to attacks in time and they may be broken by novel deciphering algorithms. For instance, Data Encryption Standard (DES) which, prior to AES, was the official Federal Information Processing Standard (FIPS) in the U.S. is no longer considered to be secure. 

With a focus on reducing the decryption complexity, we propose a completely new security paradigm for archival data in distributed cloud systems. Reduced decryption complexity ensures that a significant amount of redundant processing power for security purposes is avoided. This saves significant amount of resources for cloud service providers. Further, we prove that the proposed solution is {\em information theoretically secure} as opposed to computationally secure. Information theoretic secrecy guarantees that our solution is secure
regardless of the computational power of the adversary and the cryptanalysis time.  

Our method is based on random sparse codes that are specifically designed for security purposes. The sparsity of these codes allows the decryption which is done on the clouds to be a low complexity operation. Further, the specific design of our codes allows us to achieve security  using the randomness of the encryption operations. Our method is optimized to work with large number of files that need to be archived. We prove that larger number of files will result in more secure solutions. Our method asymptotically achieves perfect secrecy posed by Shannon in \cite{shannon1949communication}. In cases when the number of files is not very large, our method is still capable of achieving a level of obfuscation that is desirable for many applications such as storing private images and videos. 

The rest of the paper is organized as follows. Section \ref{sec_related} is dedicated to the related works in security of distributed storage systems and also the works on utilizing codes for security purposes. The assumptions and problem formulation are described in section \ref{sec_problem}. In section \ref{sec_dense_enc} we will prove that at least one dense encoding scheme exists that results in a secure solution and in section  \ref{sec_security} we will examine the security of our approach. 
The complexity analysis and simulation results are provided in section \ref{sec_sim} and the paper is concluded in section \ref{sec_conc}.

\section{Related Works}
\label{sec_related}

Secure transmission of data is usually implemented at higher layers of network. Recently, there is a significant interest in studying physical layer security. Physical layer security assumes that all receivers, both legitimate and eavesdropper, possess the same complete knowledge of the transmission technique. The main idea is based on the original idea that there is a transmitter (Alice) who wants to transmit data to a legitimate receiver (Bob) while an eavesdropper (Eve) tries to listen to this communication and obtain the information. In this scenario, Alice can adopt any kind of encoding, modulation or even randomization before transmission but both Bob and Eve are aware of the transmission technique being used by Alice. Therefore, if there is no noise or channel impairment, an eavesdropper can perfectly decode the message. Wyner in 1975 \cite{wyner1975wire} proved that in case of noisy wiretap channels, Alice can encode the message so that it reveals no information to the Eve. An important parameter in wiretap channel is the secrecy capacity. Secrecy capacity is defined as the highest transmission rate that data can be transmitted in a wiretap channel such that Eve cannot decode any information and its probability of error stays close to 0.5.

After the Wyner paper, many researchers started to study the wiretap channel capacity for different scenarios. Also achieving that capacity was another research objective in many publications.  Many researchers \cite{zhang2014polar, harrison2009physical,kiskani2016effect, kiskani2013social, kiskani2011novel, kiskani2010delay,vahidian2015relay, hosseini2016cloud} studied the use of error correcting codes for wiretap channels and other types of networks. 

In this paper, we are deviating from this common approach and investigating a completely different problem. Suppose Alice has some data that wants to store in a cloud storage system. However, Alice does not want the cloud storage system to be able to access this data. Therefore, if this data is accessed by Eve, no information can be obtained from the stored data. Further, when this information that is stored by Alice in the cloud is wiretapped by Eve during transmission, no useful information can be obtained by her. Note that we also assume that no encryption technique is used and only coding schemes are utilized to protect the data. There are some prior work that attempted to use existing coding schemes to address this issue. 

Some papers use fountain codes \cite{mackay2005fountain} for content retrieval. The advantages of coding in caching and storage systems have been shown in our previous works in \cite{kiskani2017throughput,kiskani2015application,kiskani2017secure1,kiskani2017secure2,kiskani2017secure3,kiskani2016capacity,kiskani2017multihop,kiskani2015opportunistic}. The application of fountain codes in distributed storage systems was also studied in  \cite{dimakis2006distributed}. Other types of erasure codes have been extensively used in storage systems. Maximum Distance Separable (MDS) codes are widely used in storage systems \cite{dimakis2010network,dimakis2011survey} due to their repair capabilities . However, certain requirements are needed to secure the applications that use these codes. Authors in \cite{dikaliotis2010security} also studied the security of distributed storage systems with MDS codes. Pawar et al.  \cite{pawar2010secure} studied the secrecy capacity of MDS codes. The authors in \cite{pawar2011securing1,pawar2011securing2} also proposed security measures for MDS coded storage systems. Shah et al. \cite{shah2011information} proposed information-theoretic secure regenerating codes for distributed storage systems. Rawat et al. \cite{rawat2014optimal} used Gabidulin codes on top of MDS codes to propose optimal locally repairable and secure codes for distributed storage systems. Unlike all of the references \cite{dikaliotis2010security,dimakis2011survey, dimakis2010network,pawar2010secure,dimakis2006distributed,rawat2014optimal,shah2011information,pawar2011securing1,pawar2011securing2}, this paper studies the use of sparse vectors to design codes to provide security for distributed storage systems. We will show that these codes can be effectively used to attain asymptotic perfect secrecy. 

Kumar et al. \cite{kumar2016secure} have proposed a construction for repairable and secure fountain codes. Reference \cite{kumar2016secure} achieves security by concatenating Gabidulin codes with Repairable Fountain Codes (RFC). Their specific design allows to use Locally Repairable Fountain Codes (LRFC) for secure repair of the lost data. Unlike \cite{kumar2016secure} which has focused on the security of the repair links using concatenated codes, the current paper provides security for the data storage by only using sparse vectors without any additional code usage such that perfect secrecy is achieved. 

Network coding schemes has been shown to be very efficient from a security point of view. Cai and Young \cite{cai2002secure} showed that network coding can be used to achieve perfect secrecy. Bhattad et al. \cite{bhattad2005weakly} studied the problem of ``weakly secure'' network coding schemes in which even without perfect secrecy, no meaningful information can be extracted from the network. Subsequent to \cite{bhattad2005weakly}, Kadhe et al. studied the problem of weakly secure storage systems in  \cite{ kadhe2014weakly1,kadhe2014weakly2}. Yan et al. also proposed \cite{yan2014weakly,yan2013algorithms} algorithms to achieve weak security and also studied weakly secure data exchange with generalized Reed Solomon codes. In our method, when using sparse vectors to design codes for cloud storage systems, the messages are encoded by combining them with each other to create the ciphertext. Hence, the ciphertext will not be independent of the message and the Shannon criteria may not be valid. Therefore it may be intuitive to think that these codes can only achieve weak security as opposed to perfect security. 
We will show that our unique code construction results in asymptotic perfect secrecy. 

As far as we know, in all previous works in literature, researchers use existing codes that were originally designed for error correction and used/modified it for security purposes. In this paper, we pose the following questions. ``{\em Can we design a code specifically for security of stored data in cloud storage systems?}'' More specifically, since we will face the problem of large quantities of data being generated in the cloud, ``{\em can we design a code that has significantly lower computational complexity for decoding as compared to commonly used encryption techniques?} The immediate benefit from such approach would be significant reduction on the number of servers needed to maintain contents securely in the cloud because of the reduction on the computational complexity associated with recovering the data. Note that the new code design may not have any error correcting capabilities. It means that commonly used codes such as MDS codes \cite{bhattad2005weakly} can be used to protect the data. Another benefit of this approach is that even the cloud storage provider is unable to access the contents which provides a level of privacy for users to store their data in the cloud. Clearly, if the cloud storage provider is unable to access the contents, any eavesdropper who is listening to the communication between the user (Alice) and the cloud (Bob), cannot obtain any useful information. 

Our final goal is to demonstrate that perfect secrecy as defined by Shannon in \cite{shannon1949communication} can be achieved without any need to store keys. Note that the original approach of Shannon cipher system \cite{shannon1949communication} is not practical since for each bit of information, we need to store one bit of key which practically doubles the storage capacity requirement. We demonstrate that without generating any key, it is feasible to achieve perfect secrecy asymptotically. Since in this work we generate equivalent of key by means of combining of contents together (encoding function), it is clear that this encoding function cannot be shared by Alice with the cloud or Eve. This is one fundamental difference between this work and common problem of wiretap channel. One can consider this as the private key that is generated by Alice to secure the contents and won't be shared with anyone.

\section{Problem Formulation}
\label{sec_problem}
Assume that a user wants to store files $f_1,f_2,\dots,f_m$ on a cloud system where each file has $Q$ bits, i.e. $f_i \in \mathbb{F}_{2^{Q}}$. We assume that these are archival data and extension of  this work to non-archival data remains as future work. The $m \times 1$ vector that represents all files is denoted by $\mathbf{f} = [f_1~f_2~\dots~f_m]^T$. 
\subsection{Encoding}
\label{sub_sec_enc}
The user encodes these files using an encoding matrix $\mathbf{A}$ of size $l \times m$ (where $l \ge m$ ) and creates\footnote{Assume that an encoding peice of software is running on the user which does all of this processing.} an encoded vector of $l$ coded files as $\mathbf{b} = \mathbf{A} \mathbf{f}$. Assume that the elements of $\mathbf{A}$ belong to the Galois Field $\mathbb{F}_2$. 
The user has a storage space of size $h << l$ and saves $h$ of these encoded contents locally and uploads the rest of them on the cloud. These $h$ encoded contents act similar to the key in traditional cryptography. Let $\mathbf{c}$ be a vector of size $(l-h) \times 1$ showing all the encoded contents stored on the cloud and $\mathbf{u}$ be a vector of size $h \times 1$ representing all the encoded contents saved on the user storage such that\footnote{Note that each row of vector $\mathbf{b}$ contains Q bits and for simplicity of presentation, we use vector representation.}
\begin{align}
 \mathbf{b} = \begin{bmatrix}
               \mathbf{c} \\
               \mathbf{u}
              \end{bmatrix}.
 \label{eqs_b_c_u}
\end{align}

\subsection{File Retrieval}
\label{sub_sec_dec}
To retrieve a content, we need to solve linear equations in Galois Field $\mathbb{F}_2$. The cloud contains $l-h$ coded files in $\mathbf{c}$ and the user stores $h$ coded files in $\mathbf{u}$ locally. When the legitimate user wants to download any of the contents, the application that runs by the user should solve the linear equation 
\begin{align}
 \mathbf{A} \mathbf{f} = \mathbf{b},
 \label{eq_linear_basic}
\end{align}
in Galois Field $\mathbb{F}_2$ to be able to respond to the download request by the user. Since $l >m$, the linear equation in \eqref{eq_linear_basic} has many solutions. Let the decoding matrix $\mathbf{D}$ be one of these solutions. This matrix can be split in two smaller matrices $\mathbf{D}_c$ of size $m \times (l-h)$ and $\mathbf{D}_u$ of size $m \times h$ denoting the cloud and user decoding matrices respectively. Therefore, the $m \times l$ decoding matrix $\mathbf{D}$ could be used to retrieve the files as 
\begin{align}
  \mathbf{f} = \mathbf{D} \mathbf{b} = \begin{bmatrix}
                                        \mathbf{D}_c ~ \mathbf{D}_u
                                       \end{bmatrix}
                                       \begin{bmatrix}
                                        \mathbf{c} \\ \mathbf{u}
                                       \end{bmatrix}
                                     =  \mathbf{D}_c \mathbf{c} + \mathbf{D}_u \mathbf{u}.
 \label{eq_linear_pseudo}
\end{align}
Notice that in equation \eqref{eq_linear_pseudo}, the first contributing term is computed on the cloud data and the second contributing term is calculated from the user stored data. It is important to notice that 
cloud only gets the matrix $\mathbf{D}_c$ and therefore it is not capable of decoding the data on its own. In the subsequent section we will show that the user contributing part $\mathbf{D}_u \mathbf{u}$ acts similar to the key and is crucial to secure decoding. 

\subsection{Low Complexity Secure Code (LCSC) Design}
\label{subsec_dense}
Our goal is to propose an encoding strategy such that $\mathbf{A}$ will be full rank with high probability and the rows of the matrix $\mathbf{A}$ are fairly dense. Further, we want to ideally have a relatively sparse decoding matrix $\mathbf{D}$. We will use the density of the encoding matrix $\mathbf{A}$ for security purposes and the sparsity of the decoding matrix $\mathbf{D}$ for low complexity decoding. In other words, we want the columns of the encoding matrix $\mathbf{D}$ to be sparse so that few cloud operations would be enough to retrieve a content while the rows of the matrix $\mathbf{A}$ are dense such that a relatively large number of files get encoded together to enhance the security of our system. 

From the definition of the decoding matric $\mathbf{D}$ we have  
\begin{align}
 \mathbf{D} \mathbf{A} = \mathbf{I}_m,
 \label{eq_A_D_prod}
\end{align}
where $\mathbf{I}_m$ is the  $m \times m$ identity matrix. If $\mathbf{a}_i$ is the $i^{th}$ column of  matrix $\mathbf{A}$, then equation \eqref{eq_A_D_prod} results in 
the following linear equation in $\mathbb{F}_2$
\begin{align}
 \mathbf{D} \mathbf{a}_i = \mathbf{e}_i,
 \label{eq_linear_eq_x_i}
\end{align}
where $\mathbf{e}_i$ is a vector of all zeros except at the $i^{th}$ position. For each column of matrix $\mathbf{A}$, e.g. $\mathbf{a}_i$, the {\em Hamming weight} of these vectors define the density of the encoding matrix $\mathbf{A}$. We will compute the density of $\mathbf{A}$ using the Hamming weight of all vectors $\mathbf{a}_i$.

In this paper, we start with a sparse decoding matrix $\mathbf{D}$ and we will show that an encoding matrix $\mathbf{A}$ exists such that the number of 1s in each row of the encoding matrix $\mathbf{A}$ is proportional to $\Theta(m)$. To prove this, let's denote   
\begin{align}
 \mathbf{A}=[\mathbf{a}_1,\mathbf{a}_2,\dots,\mathbf{a}_m],
 \label{eq_matrix_a}
\end{align}
where each $\mathbf{a}_i$ is an $l \times 1$ column vector. We will show that an encoding matrix $\mathbf{A}$ exists such that each $\mathbf{a}_i$ will have a Hamming weight of $\Theta(l)$. Notice that the vector $\mathbf{a}_i$ is a solution to the linear equation in \eqref{eq_linear_eq_x_i} in $\mathbb{F}_2$.

\begin{mydef}{\em 
 A random vector $\mathbf{w} = (\omega_1, \omega_2, \dots, \omega_m)^T \in \mathbb{F}_2^m$ is called {\em $\sigma$-sparse} if all of its elements are independent of each other and we have
    \begin{align}
  \mathbb{P}[\omega_i = 1 ] &= \frac{1}{2}(1-\sigma) \nonumber \\
  \mathbb{P}[\omega_i = 0 ] &= \frac{1}{2}(1+\sigma).
  \label{eq_ksparse_2}
 \end{align}
 }\label{def_sigma_sparse}
 \end{mydef}
 Assume that $l$ independent $\sigma$-sparse vectors $\mathbf{d}_1, \mathbf{d}_2, \dots, \mathbf{d}_l$ are the columns of the decoding matrix $\mathbf{D}$. In other words, let 
 \begin{align}
 \mathbf{D}=[\mathbf{d}_1,\mathbf{d}_2,\dots,\mathbf{d}_l].
 \label{eq_matrix_d}
\end{align}
 In the next section, we will compute the density of the encoding matrix $\mathbf{A}$.

\section{Existence of a Dense Encoding Matrix}
\label{sec_dense_enc}
In order to compute the density of the encoding matrix $\mathbf{A}$, we first prove the following useful lemmas.
 \begin{lem}{\em 
 For a vector $\mathbf{x} \in \mathbb{F}_2^l$ with Hamming weight $k$, we have 
 \begin{align}
  \mathbb{P}[\mathbf{D} \mathbf{x} = \mathbf{e}_i~ | ~wt(\mathbf{x})=k] = 2^{-m} \left(1 - \sigma^k \right) 
 \left(1 + \sigma^k \right)^{m-1}.
  \label{eqs_lem_fix_x}
 \end{align}
  }\label{lem_fix_x}
 \end{lem}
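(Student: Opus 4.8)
The plan is to exploit the fact that, over $\mathbb{F}_2$, the vector $\mathbf{D}\mathbf{x}$ is simply the sum of the columns of $\mathbf{D}$ indexed by the support of $\mathbf{x}$. Writing $S = \{\, j : x_j = 1 \,\}$, so $|S| = k$ since $wt(\mathbf{x}) = k$, we have $\mathbf{D}\mathbf{x} = \sum_{j \in S} \mathbf{d}_j =: \mathbf{s}$. Because the columns $\mathbf{d}_1,\dots,\mathbf{d}_l$ are independent and each is a $\sigma$-sparse vector with mutually independent coordinates, the $m$ coordinates $s_1,\dots,s_m$ of $\mathbf{s}$ are mutually independent; moreover each $s_t$ is the modulo-$2$ sum of exactly $k$ independent Bernoulli random variables, each equal to $1$ with probability $\tfrac{1}{2}(1-\sigma)$ by Definition \ref{def_sigma_sparse}.

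The one genuine computational ingredient is the distribution of an XOR of $k$ independent biased bits (the "piling-up" identity). For a single bit $B$ with $\mathbb{P}[B = 1] = p$ one has $\mathbb{E}[(-1)^B] = 1 - 2p$, and taking $p = \tfrac{1}{2}(1-\sigma)$ gives $\mathbb{E}[(-1)^B] = \sigma$. Since the expectation of a product of independent random variables factors, $\mathbb{E}[(-1)^{s_t}] = \sigma^k$. Combining this with $\mathbb{P}[s_t = 0] + \mathbb{P}[s_t = 1] = 1$ and $\mathbb{P}[s_t = 0] - \mathbb{P}[s_t = 1] = \mathbb{E}[(-1)^{s_t}] = \sigma^k$ yields
\begin{align}
\mathbb{P}[s_t = 1] = \frac{1 - \sigma^k}{2}, \qquad \mathbb{P}[s_t = 0] = \frac{1 + \sigma^k}{2}.
\end{align}
(Equivalently, this follows by an easy induction on $k$ directly from the definition of $\sigma$-sparseness.)

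Finally, the event $\mathbf{D}\mathbf{x} = \mathbf{e}_i$ says precisely that $s_i = 1$ while $s_t = 0$ for every $t \neq i$. Using the coordinatewise independence established in the first step,
\begin{align}
\mathbb{P}[\mathbf{D}\mathbf{x} = \mathbf{e}_i \mid wt(\mathbf{x}) = k] = \mathbb{P}[s_i = 1] \prod_{t \neq i} \mathbb{P}[s_t = 0] = \frac{1 - \sigma^k}{2}\left(\frac{1 + \sigma^k}{2}\right)^{m-1},
\end{align}
which equals $2^{-m}(1 - \sigma^k)(1 + \sigma^k)^{m-1}$, as claimed; note the answer is independent of $i$, as symmetry demands. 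There is no serious obstacle in this argument: the only points requiring care are justifying that the coordinates of $\mathbf{s}$ are independent (inherited from the independence of entries within and across the $\mathbf{d}_j$) and invoking the piling-up identity correctly, after which the conclusion is immediate.
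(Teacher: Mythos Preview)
Your proof is correct and follows essentially the same approach as the paper: both reduce $\mathbf{D}\mathbf{x}$ to the XOR of the $k$ selected columns, use independence to factor the probability coordinatewise, and compute $\mathbb{P}[s_t=1]=\tfrac{1}{2}(1-\sigma^k)$ (the paper does this by a direct induction on $k$, whereas you use the equivalent character/piling-up identity $\mathbb{E}[(-1)^{s_t}]=\sigma^k$ and also note the inductive alternative). The final assembly into $2^{-m}(1-\sigma^k)(1+\sigma^k)^{m-1}$ is identical.
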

\begin{proof}
 Since the Hamming weight of $\mathbf{x}$ is equal to $k$, this means that $k$ vectors 
 from the set of all vectors $\mathbf{d}_1, \mathbf{d}_2, \dots, \mathbf{d}_l$ are added together
 to create $\mathbf{e}_i$. Let's denote these vectors by $\mathbf{d}_{e_1}, \mathbf{d}_{e_2}, \dots,
 \mathbf{d}_{e_k}$. Let $d^{e_j}_{i}$ denote the $i^{th}$ element of vector $\mathbf{d}_{e_j}$.
 Since the vectors $\mathbf{d}_{e_1}, \mathbf{d}_{e_2}, \dots, \mathbf{d}_{e_k}$ are 
 independent and their elements are also mutually independent, using binary summation over $\mathbb{F}_2$ we have 
 \begin{align}
  \mathbb{P}[\mathbf{D} \mathbf{x} = \mathbf{e}_i | wt(\mathbf{x})=k] & = 
  \mathbb{P}[\sum_{j=1}^k d^{e_j}_{i}= 1] \prod_{\substack{l'=1\\l' \neq i}}^m
  \mathbb{P}[\sum_{j=1}^k d^{e_j}_{l'} = 0].
  \label{eqs_lem_fix_x_proof}
 \end{align}
 We can easily prove that 
 \begin{align}
  \mathbb{P}[\sum_{j=1}^k d^{e_j}_{i} = 1] = \frac{1}{2}(1-\sigma^k) 
  \label{eqs_lem_fix_x_proof_2}
 \end{align}
To prove this, we can use induction on $k$. Equation \eqref{eq_ksparse_2} shows that it is valid 
for the base case $k=1$. Assume that it is valid for $k-1$. We have 
  \begin{align}
  &\mathbb{P}[\sum_{j=1}^k d^{e_j}_{i}= 1] =  
  \mathbb{P}[d^{e_k}_{i} = 1]\mathbb{P}[\sum_{j=1}^{k-1} d^{e_j}_{i} = 0] \nonumber \\
  &\qquad +   \mathbb{P}[d^{e_k}_{i} = 0]\mathbb{P}[\sum_{j=1}^{k-1} d^{e_j}_{i} = 1] 
  =\frac{1}{2}(1-\sigma) \frac{1}{2}(1+\sigma^{k-1}) \nonumber \\
  &\qquad +\frac{1}{2}(1+\sigma) \frac{1}{2}(1-\sigma^{k-1}) = \frac{1}{2}(1-\sigma^k)  
  \label{eqs_lem_fix_x_proof_22}
 \end{align}
 Similarly, using induction on $k$ and equation \eqref{eq_ksparse_2} we can also prove that 
 \begin{align}
  \mathbb{P}[\sum_{j=1}^k d^{e_j}_{l'} = 0] = \frac{1}{2}(1+\sigma^k).
 \end{align}
 Hence, equation \eqref{eqs_lem_fix_x_proof} can be simplified to 
  \begin{align}
  \mathbb{P}[\mathbf{D} \mathbf{x} = \mathbf{e}_i~ | ~wt(\mathbf{x})=k] = 2^{-m} \left(1 - \sigma^k \right) 
 \left(1 + \sigma^k \right)^{m-1} 
 \nonumber 
  \label{eqs_lem_fix_x_proof_final}
 \end{align}
 \end{proof}
  \begin{mydef}
  {\em 
    Let $\mathcal{F}^l \subseteq \mathbb{F}_2^l$ be the subset of all vectors in $\mathbb{F}_2^l$ with a Hamming weight of at least $\frac{l}{2}$.  
  }\label{def_set}
 \end{mydef}
 We will prove that with probability close to one a solution of equation \eqref{eq_linear_eq_x_i} belongs to $\mathcal{F}^l$. To find bounds on the probability that $\mathbf{e}_i$ is spanned by $\mathbf{d}_1, \mathbf{d}_2, \dots, \mathbf{d}_l$ 
we define a new random variable $Y_i$ and an indicator function $\mathbbm{1}_i(\mathbf{x}) $ as follows
\begin{mydef}
  {\em 
    Let $Y_i$ denote the number of vectors $\mathbf{x} \in \mathcal{F}^l$ such that $\mathbf{D} \mathbf{x} = \mathbf{e}_i$.
  }\label{def_Y_i}
\end{mydef}
\begin{mydef}
  {\em 
  Let $\mathbbm{1}_i(\mathbf{x}) $ be an indicator function which is equal to 1 if $\mathbf{D} \mathbf{x} = \mathbf{e}_i$ and equal to 0 otherwise.
  }\label{def_indicator}
\end{mydef}

\begin{lem}{\em
If $\mathbf{D} = [\mathbf{d}_1~ \mathbf{d}_2 ~\dots~ \mathbf{d}_l]$, then the average number of vectors $\mathbf{x} \in \mathcal{F}^l$ such that $\mathbf{D} \mathbf{x} = \mathbf{e}_i$ is equal to 
\begin{align}
 \mathbb{E}[Y_i] = 2^{-m}\sum_{j=\frac{l}{2}}^l \binom{l}{j} \left(1 - \sigma^j \right) 
 \left(1 + \sigma^j \right)^{m-1}.
\end{align}
}\label{lem_e_yi}
\end{lem}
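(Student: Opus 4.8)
The plan is to compute $\mathbb{E}[Y_i]$ directly by expanding $Y_i$ as a sum of indicator functions over all vectors in $\mathcal{F}^l$ and invoking linearity of expectation together with Lemma~\ref{lem_fix_x}.

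First I would write, using Definitions~\ref{def_Y_i} and~\ref{def_indicator},
\begin{align}
 Y_i = \sum_{\mathbf{x} \in \mathcal{F}^l} \mathbbm{1}_i(\mathbf{x}),
 \nonumber
\end{align}
so that by linearity of expectation
\begin{align}
 \mathbb{E}[Y_i] = \sum_{\mathbf{x} \in \mathcal{F}^l} \mathbb{E}[\mathbbm{1}_i(\mathbf{x})] = \sum_{\mathbf{x} \in \mathcal{F}^l} \mathbb{P}[\mathbf{D}\mathbf{x} = \mathbf{e}_i].
 \nonumber
\end{align}
Next I would observe that, by Lemma~\ref{lem_fix_x}, the probability $\mathbb{P}[\mathbf{D}\mathbf{x} = \mathbf{e}_i \mid wt(\mathbf{x}) = k]$ depends on $\mathbf{x}$ only through its Hamming weight $k$, namely it equals $2^{-m}(1-\sigma^k)(1+\sigma^k)^{m-1}$. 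Since $\mathcal{F}^l$ is exactly the set of vectors with Hamming weight at least $l/2$, I would group the sum over $\mathbf{x}$ by weight: there are $\binom{l}{j}$ vectors of weight $j$, and $j$ ranges from $\lceil l/2 \rceil$ (written $l/2$ in the paper's notation) up to $l$. This gives
\begin{align}
 \mathbb{E}[Y_i] = \sum_{j=\frac{l}{2}}^l \binom{l}{j} \cdot 2^{-m}(1-\sigma^j)(1+\sigma^j)^{m-1},
 \nonumber
\end{align}
which is the claimed identity after pulling $2^{-m}$ out front.

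This argument is essentially a routine application of linearity of expectation plus a bookkeeping step, so there is no serious obstacle; the only point requiring a little care is the justification that the conditional probability in Lemma~\ref{lem_fix_x} is the same as the unconditional $\mathbb{P}[\mathbf{D}\mathbf{x} = \mathbf{e}_i]$ for a \emph{fixed} deterministic $\mathbf{x}$ of weight $k$ — here the randomness lives entirely in $\mathbf{D}$ (the $\sigma$-sparse columns $\mathbf{d}_1,\dots,\mathbf{d}_l$), and $\mathbf{x}$ is not random, so the conditioning on $wt(\mathbf{x}) = k$ is merely a way of saying "let $\mathbf{x}$ have weight $k$." Once that is noted, the weight-class decomposition of the sum and the factor $\binom{l}{j}$ follow immediately, completing the proof.
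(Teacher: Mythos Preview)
Your proposal is correct and follows essentially the same approach as the paper: write $Y_i$ as a sum of indicators over $\mathcal{F}^l$, apply linearity of expectation, invoke Lemma~\ref{lem_fix_x}, and group terms by Hamming weight to obtain the $\binom{l}{j}$ factor. The paper's proof is terser but identical in substance.
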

\begin{proof}
  For every $\mathbf{x} \in \mathcal{F}^l$, we have  $ Y_i= \sum_{\mathbf{x} \in \mathcal{F}^l}  \mathbbm{1}_i(\mathbf{x})$ and $\mathbb{E}[Y_i] =  \sum_{\mathbf{x} \in \mathcal{F}^l}
\mathbb{P}[\mathbf{D} \mathbf{x} = \mathbf{e}_i]$. Using the result of Lemma \ref{lem_fix_x} and taking the summation over all values of $k$ proves the lemma. 
\end{proof}
\begin{lem}{\em
 We have, $\mathbb{E}[Y_i^2] \le   \mathbb{E}[Y_i] + \mathbb{E}^2[Y_i]$.
 }\label{lem_var}
\end{lem}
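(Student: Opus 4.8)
The plan is to compute $\mathbb{E}[Y_i^2]$ directly by expanding the square of the sum of indicator functions, and then to bound the cross terms.

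First I would write $Y_i^2 = \left( \sum_{\mathbf{x} \in \mathcal{F}^l} \mathbbm{1}_i(\mathbf{x}) \right)^2 = \sum_{\mathbf{x} \in \mathcal{F}^l} \sum_{\mathbf{y} \in \mathcal{F}^l} \mathbbm{1}_i(\mathbf{x}) \mathbbm{1}_i(\mathbf{y})$, and split the double sum into the diagonal terms ($\mathbf{x} = \mathbf{y}$) and the off-diagonal terms ($\mathbf{x} \neq \mathbf{y}$). Since $\mathbbm{1}_i(\mathbf{x})$ is a $0/1$-valued indicator, $\mathbbm{1}_i(\mathbf{x})^2 = \mathbbm{1}_i(\mathbf{x})$, so the diagonal contributes exactly $\mathbb{E}[Y_i]$ after taking expectations. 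It then remains to show that the off-diagonal contribution $\sum_{\mathbf{x} \neq \mathbf{y}} \mathbb{P}[\mathbf{D}\mathbf{x} = \mathbf{e}_i, \mathbf{D}\mathbf{y} = \mathbf{e}_i]$ is at most $\mathbb{E}^2[Y_i] = \sum_{\mathbf{x}} \sum_{\mathbf{y}} \mathbb{P}[\mathbf{D}\mathbf{x} = \mathbf{e}_i]\mathbb{P}[\mathbf{D}\mathbf{y} = \mathbf{e}_i]$.

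The key step is to argue that for any two distinct $\mathbf{x}, \mathbf{y} \in \mathcal{F}^l$ we have the inequality $\mathbb{P}[\mathbf{D}\mathbf{x} = \mathbf{e}_i,\, \mathbf{D}\mathbf{y} = \mathbf{e}_i] \le \mathbb{P}[\mathbf{D}\mathbf{x} = \mathbf{e}_i]\,\mathbb{P}[\mathbf{D}\mathbf{y} = \mathbf{e}_i]$, i.e. these two events are \emph{negatively correlated} (or at least not positively correlated). The intuition is that $\mathbf{D}\mathbf{x} = \mathbf{e}_i$ and $\mathbf{D}\mathbf{y} = \mathbf{e}_i$ together force $\mathbf{D}(\mathbf{x}+\mathbf{y}) = \mathbf{0}$, a nontrivial constraint since $\mathbf{x} \neq \mathbf{y}$; conditioning on one event restricts the randomness of $\mathbf{D}$ in a way that cannot help the other. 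Concretely, I would condition on $\mathbf{D}\mathbf{x} = \mathbf{e}_i$ and compute $\mathbb{P}[\mathbf{D}\mathbf{y} = \mathbf{e}_i \mid \mathbf{D}\mathbf{x} = \mathbf{e}_i]$ by exploiting the independence of the columns $\mathbf{d}_1, \dots, \mathbf{d}_l$ and of their entries, using the same kind of parity/convolution computation as in Lemma \ref{lem_fix_x} — split coordinates according to the support of $\mathbf{x}$ and $\mathbf{y}$ (in $\mathbf{x}\setminus\mathbf{y}$, $\mathbf{y}\setminus\mathbf{x}$, and $\mathbf{x}\cap\mathbf{y}$), and check that the conditional probability is bounded above by the unconditional $\mathbb{P}[\mathbf{D}\mathbf{y} = \mathbf{e}_i]$. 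Summing this inequality over all ordered pairs $\mathbf{x} \neq \mathbf{y}$ and adding the diagonal term gives $\mathbb{E}[Y_i^2] \le \mathbb{E}[Y_i] + \sum_{\mathbf{x} \neq \mathbf{y}} \mathbb{P}[\mathbf{D}\mathbf{x}=\mathbf{e}_i]\mathbb{P}[\mathbf{D}\mathbf{y}=\mathbf{e}_i] \le \mathbb{E}[Y_i] + \mathbb{E}^2[Y_i]$, as claimed.

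The main obstacle I anticipate is verifying the negative-correlation inequality cleanly: one must be careful because $\sigma \in (0,1)$ makes the entries of $\mathbf{d}_j$ biased rather than uniform, so the events are not exactly independent the way they would be for uniform random $\mathbf{D}$. I would handle this by the explicit coordinate-wise computation sketched above, factoring each probability as a product over the $l$ columns and, within the columns lying in the support, over the $m$ coordinates, and then comparing the conditional and unconditional products term by term. The combinatorics of which parity patterns are consistent with $\mathbf{D}\mathbf{x} = \mathbf{e}_i$ is the fiddly part, but it reduces to elementary case analysis and the convolution identity $\frac{1}{2}(1\pm\sigma^k)$ already established.
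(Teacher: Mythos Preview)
Your diagonal/off-diagonal decomposition matches the paper's proof exactly, and the diagonal contribution $\mathbb{E}[Y_i]$ is handled the same way. The difference is in the cross terms: the paper simply writes $\mathbb{E}[\mathbbm{1}_i(\mathbf{x}_1)\mathbbm{1}_i(\mathbf{x}_2)] = \mathbb{P}[\mathbf{D}\mathbf{x}_1 = \mathbf{e}_i]\,\mathbb{P}[\mathbf{D}\mathbf{x}_2 = \mathbf{e}_i]$ for $\mathbf{x}_1 \neq \mathbf{x}_2$ and then observes that the resulting double sum over distinct pairs equals $\mathbb{E}^2[Y_i] - \sum_{\mathbf{x}_1}\mathbb{P}[\mathbf{D}\mathbf{x}_1=\mathbf{e}_i]^2 \le \mathbb{E}^2[Y_i]$. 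You instead set out to prove the pairwise inequality $\mathbb{P}[\mathbf{D}\mathbf{x} = \mathbf{e}_i,\, \mathbf{D}\mathbf{y} = \mathbf{e}_i] \le \mathbb{P}[\mathbf{D}\mathbf{x} = \mathbf{e}_i]\,\mathbb{P}[\mathbf{D}\mathbf{y} = \mathbf{e}_i]$.

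However, this negative-correlation inequality is \emph{false} for $\sigma > 0$ whenever the supports of $\mathbf{x}$ and $\mathbf{y}$ overlap, so the route you plan cannot go through. A minimal counterexample: take $m=1$, $l=2$, $\mathbf{x} = (1,0)^T$, $\mathbf{y} = (1,1)^T$, $\mathbf{e}_i = (1)$ (both $\mathbf{x},\mathbf{y}\in\mathcal{F}^l$ since their weights are $\ge l/2=1$). Then
\[
\mathbb{P}[\mathbf{D}\mathbf{x}=1,\ \mathbf{D}\mathbf{y}=1] \;=\; \mathbb{P}[d_1=1,\ d_2=0] \;=\; \tfrac{1}{4}(1-\sigma^2),
\]
while the product of the marginals is $\tfrac{1}{2}(1-\sigma)\cdot\tfrac{1}{2}(1-\sigma^2) = \tfrac{1}{4}(1-\sigma)(1-\sigma^2)$; the joint exceeds the product by the factor $1/(1-\sigma) > 1$. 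The intuition you offer --- that $\mathbf{D}(\mathbf{x}+\mathbf{y})=\mathbf{0}$ is an extra constraint that should hurt --- actually cuts the other way here: since the entries of $\mathbf{D}$ are biased toward $0$, the event $\mathbf{D}\mathbf{z} = \mathbf{0}$ for the short vector $\mathbf{z}=\mathbf{x}+\mathbf{y}$ has probability \emph{larger} than $2^{-m}$, and conditioning on $\mathbf{D}\mathbf{x}=\mathbf{e}_i$ pushes the shared columns in a direction that helps $\mathbf{D}\mathbf{y}=\mathbf{e}_i$. So the ``fiddly'' coordinate-wise comparison you anticipate would in fact come out with the wrong sign term by term. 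The paper's argument does not attempt this computation at all; it treats the off-diagonal indicators as independent outright and passes directly to the product form.
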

\begin{proof}
Since  $ Y_i= \sum_{\mathbf{x} \in \mathcal{F}^l}  \mathbbm{1}_i(\mathbf{x})$ we have 
 \begin{align}
  \mathbb{E}[Y_i^2] &= \mathbb{E} \left[ \sum_{\mathbf{x}_1 \in \mathcal{F}^l} \sum_{\mathbf{x}_2  \in \mathcal{F}^l}
  \mathbbm{1}_i(\mathbf{x}_1)   \mathbbm{1}_i(\mathbf{x}_2)  \right] \nonumber \\
  &= \sum_{\mathbf{x}_1 \in \mathcal{F}^l} \sum_{\mathbf{x}_2  \in \mathcal{F}^l} 
  \mathbb{E} [\mathbbm{1}_i(\mathbf{x}_1)   \mathbbm{1}_i(\mathbf{x}_2)] 
  = \sum_{\mathbf{x}_1 \in \mathcal{F}^l} \mathbb{P}[\mathbf{D} \mathbf{x}_1 = \mathbf{e}_i] \nonumber \\
  &+ \sum_{\mathbf{x}_1 \in \mathcal{F}^l} \sum_{\substack{\mathbf{x}_2  \in \mathcal{F}^l \\ 
  \mathbf{x}_1 \neq \mathbf{x}_2}} \mathbb{P}[\mathbf{D} \mathbf{x}_1 = \mathbf{e}_i] 
  \mathbb{P}[\mathbf{D} \mathbf{x}_2 = \mathbf{e}_i] = \mathbb{E}[Y_i]   \nonumber \\
  &+ \left(\mathbb{E}[Y_i]\right)^2  - \sum_{\mathbf{x}_1 \in \mathcal{F}^l} 
  \left(\mathbb{P}[\mathbf{D} \mathbf{x}_1 = \mathbf{e}_i] \right)^2
  \le \mathbb{E}[Y_i] + \mathbb{E}^2[Y_i]. \nonumber 
 \end{align}
\end{proof}
  \begin{lem}{\em
  The probability that $\mathbf{e}_i$ is the summation of at least $k$ vectors in $\mathbf{d}_1,\mathbf{d}_2,\dots,\mathbf{d}_l$ is lower bounded by 
\begin{align}
 \mathbb{P}[\exists \mathbf{x} \in \mathcal{F}^l 
 ~\textrm{s.t.}~  \mathbf{D} \mathbf{x} = \mathbf{e}_i] \ge \frac{1}{1+\frac{1}{\mathbb{E}[Y_i]}}.
\end{align}
  }\label{lem_lower}
 \end{lem}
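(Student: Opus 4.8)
The plan is to recognize this as a textbook second-moment (Paley--Zygmund) argument, using exactly the two ingredients already in hand: the first moment $\mathbb{E}[Y_i]$ from Lemma~\ref{lem_e_yi} and the second-moment bound $\mathbb{E}[Y_i^2] \le \mathbb{E}[Y_i] + \mathbb{E}^2[Y_i]$ from Lemma~\ref{lem_var}. The first observation to record is that the event whose probability we want is precisely $\{Y_i \ge 1\}$: by Definition~\ref{def_Y_i}, $Y_i$ counts the vectors $\mathbf{x}\in\mathcal{F}^l$ with $\mathbf{D}\mathbf{x}=\mathbf{e}_i$, so $\{\exists\,\mathbf{x}\in\mathcal{F}^l \text{ s.t. } \mathbf{D}\mathbf{x}=\mathbf{e}_i\} = \{Y_i \ge 1\} = \{Y_i > 0\}$, and $Y_i$ is a nonnegative integer-valued random variable.

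Next I would apply the Cauchy--Schwarz inequality to the decomposition $Y_i = Y_i \cdot \mathbbm{1}_{\{Y_i>0\}}$. This gives
\begin{align}
 \mathbb{E}[Y_i] = \mathbb{E}\!\left[Y_i \,\mathbbm{1}_{\{Y_i>0\}}\right] \le \sqrt{\mathbb{E}[Y_i^2]}\,\sqrt{\mathbb{P}[Y_i>0]},
 \label{eq_cs_step}
\end{align}
and squaring and rearranging yields the Paley--Zygmund-type bound
\begin{align}
 \mathbb{P}[Y_i>0] \ge \frac{\mathbb{E}^2[Y_i]}{\mathbb{E}[Y_i^2]}.
 \label{eq_pz_step}
\end{align}
(Implicitly one uses $\mathbb{E}[Y_i]>0$, which holds since $\mathbb{E}[Y_i]$ is a strictly positive sum of the positive terms in Lemma~\ref{lem_e_yi}; otherwise the stated bound is vacuous.)

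Finally I would substitute the second-moment estimate of Lemma~\ref{lem_var} into the denominator of \eqref{eq_pz_step}:
\begin{align}
 \mathbb{P}[Y_i>0] \ge \frac{\mathbb{E}^2[Y_i]}{\mathbb{E}[Y_i] + \mathbb{E}^2[Y_i]} = \frac{1}{1 + \frac{1}{\mathbb{E}[Y_i]}},
 \label{eq_final_step}
\end{align}
which is exactly the claimed lower bound, completing the proof.

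There is no real obstacle here: the argument is entirely mechanical once Lemmas~\ref{lem_e_yi} and~\ref{lem_var} are available. The only points that need a word of care are (i) identifying the target event with $\{Y_i>0\}$ so that the second-moment method applies, and (ii) noting that the bound is only meaningful when $\mathbb{E}[Y_i]>0$. The genuinely substantive work — bounding $\mathbb{E}[Y_i^2]$ by $\mathbb{E}[Y_i]+\mathbb{E}^2[Y_i]$, which relied on the crude step of dropping the subtracted sum $\sum_{\mathbf{x}_1}(\mathbb{P}[\mathbf{D}\mathbf{x}_1=\mathbf{e}_i])^2$ — has already been done in Lemma~\ref{lem_var}, so this lemma is just the packaging step.
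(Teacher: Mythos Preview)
Your proposal is correct and matches the paper's own proof essentially line for line: the paper also identifies the event with $\{Y_i>0\}$, invokes the second-moment inequality $\mathbb{P}[Y_i>0]\ge \mathbb{E}^2[Y_i]/\mathbb{E}[Y_i^2]$, and then plugs in Lemma~\ref{lem_var}. The only difference is that you spell out the Cauchy--Schwarz justification for the Paley--Zygmund bound, whereas the paper simply cites ``the second moment method''.
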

\begin{proof}
Since $Y_i$ is a non-negative integer random variable, from the {\em second moment method} in 
probability theory we have 
 \begin{align}
  \mathbb{P}[Y_i > 0] 
 \ge \frac{\mathbb{E}^2[Y_i]}{\mathbb{E}[Y_i^2]}.
 \end{align}
Hence, using Lemma \ref{lem_var} we have 
 \begin{align}
  \mathbb{P}[\exists \mathbf{x} \in \mathcal{F}^l
 ~\textrm{s.t.}~  \mathbf{D} \mathbf{x} = \mathbf{e}_i] = \mathbb{P}[Y_i > 0] 
 \ge \frac{\mathbb{E}^2[Y_i]}{\mathbb{E}[Y_i^2]} \ge \frac{1}{1+\frac{1}{\mathbb{E}[Y_i]}} 
 \nonumber 
 \end{align}
\end{proof}
\begin{lem}{\em 
  For any $0 < \alpha < 1$, we have 
  \begin{align}
  \frac{1}{l+1} 2^{l H(\alpha)} \le  \binom{l}{\alpha l} \le 2^{l H(\alpha)},
  \end{align}
 where $H(\alpha)$ denotes the entropy, i.e., $H(\alpha) = -\alpha \log_2(\alpha) - (1 - \alpha) \log_2(1-\alpha)$.
}\label{lem_binom_entropy} 
\end{lem}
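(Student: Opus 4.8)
The plan is to use the binomial identity
$1 = \bigl(\alpha + (1-\alpha)\bigr)^{l} = \sum_{k=0}^{l}\binom{l}{k}\alpha^{k}(1-\alpha)^{l-k}$
together with the elementary identity
$\alpha^{\alpha l}(1-\alpha)^{(1-\alpha)l} = 2^{-lH(\alpha)}$,
which is immediate from the definition of $H(\alpha)$ after taking $\log_2$ of both sides.

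\textbf{Upper bound.} I would isolate in the binomial sum the single term corresponding to $k=\alpha l$. Since every summand is non-negative, that one term is at most $1$, i.e. $\binom{l}{\alpha l}\,\alpha^{\alpha l}(1-\alpha)^{(1-\alpha)l}\le 1$. Dividing through by $\alpha^{\alpha l}(1-\alpha)^{(1-\alpha)l}=2^{-lH(\alpha)}$ gives $\binom{l}{\alpha l}\le 2^{lH(\alpha)}$.

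\textbf{Lower bound.} Here I would show that the term $g(k):=\binom{l}{k}\alpha^{k}(1-\alpha)^{l-k}$ attains its maximum over $k\in\{0,1,\dots,l\}$ at $k=\alpha l$. The tool is the ratio of consecutive terms, $g(k+1)/g(k)=\dfrac{(l-k)\alpha}{(k+1)(1-\alpha)}$: a short computation shows this ratio is $\ge 1$ when $k<\alpha l$ and $\le 1$ when $k\ge \alpha l$, so $g$ increases up to $k=\alpha l$ and decreases afterwards. Since the $l+1$ non-negative numbers $g(0),\dots,g(l)$ sum to $1$, the largest of them is at least their average $\tfrac{1}{l+1}$; hence $g(\alpha l)=\binom{l}{\alpha l}\,2^{-lH(\alpha)}\ge \tfrac{1}{l+1}$, which rearranges to $\binom{l}{\alpha l}\ge\tfrac{1}{l+1}2^{lH(\alpha)}$.

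The only point that needs a little care is the mode argument in the lower bound: because $\alpha l$ is (implicitly) assumed to be an integer, one must verify that the sign change of $g(k+1)/g(k)-1$ really places the maximum at exactly $k=\alpha l$ and not at an adjacent index. Apart from that, both halves reduce to routine algebraic manipulation of the binomial identity, so I do not expect any genuine obstacle.
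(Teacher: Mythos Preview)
Your argument is correct and is in fact the standard proof of this estimate (essentially the method-of-types bound on the binomial pmf). The paper does not actually prove this lemma at all: it simply cites the appendix of MacKay's sparse-codes paper, where the same argument appears. So your proposal supplies a self-contained proof where the paper only gives a pointer; the content is the same classical inequality, and your handling of the mode location (checking that when $\alpha l$ is an integer the ratio $g(k+1)/g(k)$ crosses $1$ between $k=\alpha l-1$ and $k=\alpha l$, so the maximum sits exactly at $\alpha l$) is the one point that deserves the care you flag.
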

\begin{proof}
The proof can be found in the appendix of \cite{mackay1999good}.
\end{proof}
\begin{thm}{\em
 Let the vectors $\mathbf{d}_1, \mathbf{d}_2, \dots, \mathbf{d}_l$ be $\sigma$-sparse random vectors  belonging to $\mathbb{F}_2^m$ such that their average Hamming weight is asymptotically 
 non-zero, i.e. 
 \begin{align}
  \lim_{m \to \infty} \mathbb{E}[wt(\mathbf{d}_i)] = \frac{1}{2} \lim_{m \to \infty} m(1-\sigma) = c_1 > 0.
  \label{eqs_thm_sigma_sparse_0}
 \end{align}
 If $l =  m (1 + \epsilon)$ where $\epsilon > 0$ is an arbitrary constant with respect to  $m$, then with a probability close to one, at least one solution of equation \eqref{eq_linear_eq_x_i} belongs to $\mathcal{F}^l$ for large $m$.
 }\label{thm_ksparse}
\end{thm}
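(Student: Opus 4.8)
The plan is to combine the second-moment estimate of Lemma~\ref{lem_lower} with the expectation formula of Lemma~\ref{lem_e_yi}. Lemma~\ref{lem_lower} already gives $\mathbb{P}[\exists \mathbf{x}\in\mathcal{F}^l~\textrm{s.t.}~\mathbf{D}\mathbf{x}=\mathbf{e}_i]\ge (1+1/\mathbb{E}[Y_i])^{-1}$, so it suffices to prove $\mathbb{E}[Y_i]\to\infty$ as $m\to\infty$; the right-hand side then tends to $1$, and since the argument is uniform in $i$ (indeed a union bound over the $m$ columns yields all of them simultaneously, because the bound on $\mathbb{E}[Y_i]$ below beats $m$), at least one solution of \eqref{eq_linear_eq_x_i} lies in $\mathcal{F}^l$ with probability close to one. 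Hence the whole task reduces to lower bounding
\begin{align}
 \mathbb{E}[Y_i] = 2^{-m}\sum_{j=\frac{l}{2}}^{l}\binom{l}{j}\left(1-\sigma^j\right)\left(1+\sigma^j\right)^{m-1}. \nonumber
\end{align}

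First I would keep only the single term $j=\lfloor l/2\rfloor$ and discard the rest (every term is non-negative). Applying the lower bound of Lemma~\ref{lem_binom_entropy} with $\alpha\approx\tfrac12$, where $H(\tfrac12)=1$, gives $\binom{l}{l/2}\ge\frac{1}{l+1}2^{l}$, and bounding $(1+\sigma^{l/2})^{m-1}\ge 1$ yields $\mathbb{E}[Y_i]\ge\frac{1-\sigma^{l/2}}{l+1}\,2^{\,l-m}=\frac{1-\sigma^{l/2}}{l+1}\,2^{\epsilon m}$. The next step is to show that $1-\sigma^{l/2}$ stays bounded away from $0$, which is exactly where hypothesis \eqref{eqs_thm_sigma_sparse_0} is used: it forces $\sigma\to 1^-$ with $1-\sigma\sim 2c_1/m$, so that $\tfrac{l}{2}(1-\sigma)=\tfrac{m(1+\epsilon)}{2}\cdot\tfrac{2c_1}{m}(1+o(1))\to c_1(1+\epsilon)$, and therefore $\sigma^{l/2}=\exp\!\big(\tfrac{l}{2}\ln\sigma\big)\le\exp\!\big(-\tfrac{l}{2}(1-\sigma)\big)\to e^{-c_1(1+\epsilon)}<1$. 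Consequently $1-\sigma^{l/2}\ge\tfrac12\big(1-e^{-c_1(1+\epsilon)}\big)>0$ for all large $m$, giving $\mathbb{E}[Y_i]\ge\frac{c_2}{l+1}\,2^{\epsilon m}$ for some constant $c_2>0$, which diverges since the exponential dominates the polynomial $l+1=m(1+\epsilon)+1$. Substituting into Lemma~\ref{lem_lower} finishes the proof.

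The one delicate point---and the place I expect to spend the most care---is the asymptotics of $\sigma^{l/2}$: since $\sigma$ itself depends on $m$ and converges to $1$, it cannot be treated as a fixed constant, and one must exploit the precise scaling $1-\sigma=\Theta(1/m)$ furnished by \eqref{eqs_thm_sigma_sparse_0} to see that $\sigma^{j}$ with $j=\Theta(m)$ tends to a constant strictly inside $(0,1)$. Everything else is a one-term lower bound plus the entropy estimate of Lemma~\ref{lem_binom_entropy}; in fact, retaining the factor $(1+\sigma^{l/2})^{m-1}$ instead of replacing it by $1$ would produce an even stronger lower bound on $\mathbb{E}[Y_i]$, but the weaker estimate above is already sufficient.
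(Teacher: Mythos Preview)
Your proposal is correct and follows essentially the same route as the paper: keep only the $j=l/2$ term in the expectation formula of Lemma~\ref{lem_e_yi}, apply Lemma~\ref{lem_binom_entropy} at $\alpha=\tfrac12$, bound $(1+\sigma^{l/2})^{m-1}\ge 1$, use the scaling $1-\sigma\sim 2c_1/m$ from \eqref{eqs_thm_sigma_sparse_0} to show $\sigma^{l/2}\to e^{-c_1(1+\epsilon)}<1$, and then invoke Lemma~\ref{lem_lower}. Your added remark about the union bound over the $m$ columns is a small strengthening the paper omits, but otherwise the arguments coincide.
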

 \begin{proof}
  Consider a base vector $\mathbf{e}_i$ for $i=1,2,\dots,m$. Using Lemmas \ref{lem_e_yi} 
  and \ref{lem_binom_entropy} we have 
  \begin{align}
   \mathbb{E}[Y_i] &\ge 2^{-m} 
   \binom{l}{\frac{l}{2}}(1-\sigma^{\frac{l}{2}})(1+\sigma^{\frac{l}{2}})^{m-1} \nonumber \\
   &\ge \frac{2^{l-m} }{l+1} (1-\sigma^{\frac{l}{2}})(1+\sigma^{\frac{l}{2}})^{m-1} 
   \ge \frac{2^{l-m} }{l+1} (1-\sigma^{\frac{l}{2}}). \nonumber 
  \end{align}
  Since $\lim_{m \to \infty} \mathbb{E}[wt(\mathbf{d}_i)]  = c_1 > 0$ we have
 \begin{align}
  \lim_{m \to \infty} \sigma^{\frac{l}{2}} &= \lim_{m \to \infty} \left( 1 - 2 \frac{c_1}{m} \right)^{\frac{1}{2}m(1+\epsilon)}  = e^{-c_1 (1+\epsilon)} \nonumber 
 \end{align}
 Hence, for large $m$ we have 
 \begin{align}
  \mathbb{E}[Y_i] \ge 2^{m \epsilon} \frac{1}{m(1+\epsilon) + 1} 
  \left(1-e^{-c_1 (1+\epsilon)}\right). \nonumber 
 \end{align}
 Therefore,
 \begin{align}
 \lim_{m \to \infty} \frac{1}{\mathbb{E}[Y_i]} \le 
  &\lim_{m \to \infty} \dfrac{1}{2^{m \epsilon} \frac{1}{m(1+\epsilon) + 1} 
  \left(1-e^{-c_1 (1+\epsilon)}\right)} = 0  \nonumber 
 \end{align}
Using Lemma \ref{lem_lower} we have 
\begin{align}
  \lim_{m \to \infty} &\mathbb{P}[\exists \mathbf{x} \in  \mathcal{F}^l ~\textrm{s.t.}~ \mathbf{D}
  \mathbf{x} = \mathbf{e}_i]   \ge \lim_{m \to \infty} \dfrac{1}{1+\frac{1}{\mathbb{E}[Y_i]}} = 1. \nonumber 
\end{align}
This proves that with a probability approaching one, when $m \to \infty$ and $l=m(1+\epsilon)$ then at least one solution to equation \eqref{eq_linear_eq_x_i} belongs to $\mathcal{F}^l$.
 \end{proof}
The above argument proves that if $l = m(1+\epsilon)$ for some $\epsilon>0$, then at least one of the solutions of the equation \eqref{eq_linear_eq_x_i} will have a Hamming weight of $\frac{l}{2}$. Hence, for any $\sigma$-sparse decoding matrix $\mathbf{D}$, at least one encoding matrix $\mathbf{A}$ exists such that all of its columns will have a Hamming weight of $\frac{l}{2}$. Therefore, the average Hamming weight of the rows of matrix $\mathbf{A}$ will be equal to $\frac{m}{2}$ which means that at least one encoding matrix exists that on average has dense rows.

\section{Security}
\label{sec_security}
In this section, we will use the results of section \ref{sec_dense_enc} to achieve perfect secrecy. Let $\mathbf{A}$ be an $l \times m$ dense encoding matrix which has a sparse decoding matrix $\mathbf{D}$. As proved in section \ref{sec_dense_enc}, the rows of this matrix will have on average $\frac{m}{2}$ non-zero elements. Therefore, we can prove the following lemma, 
\begin{lem}
 {\em 
 If the number of non-zero elements of an encoding vector increases with the number of files $m$, then the asymptotic distribution of bits of the encoded files tend to uniform.
 }\label{lem_uniform_encoded}
\end{lem}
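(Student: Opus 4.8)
The plan is to reduce the statement to the classical \emph{piling-up lemma} for XOR-sums of independent biased bits over $\mathbb{F}_2$. First I would fix a coded file, say the $j$-th entry $b_j$ of $\mathbf{b} = \mathbf{A}\mathbf{f}$. Because the entries of $\mathbf{A}$ lie in $\mathbb{F}_2$ while each file lies in $\mathbb{F}_{2^Q}$, the product $\mathbf{A}\mathbf{f}$ acts bitwise, so $b_j = \bigoplus_{i \in S_j} f_i$ where $S_j = \{ i : A_{ji} = 1 \}$ is the support of the $j$-th encoding vector (row of $\mathbf{A}$) and $w_j = |S_j|$ is its Hamming weight. The hypothesis of the lemma is exactly that $w_j \to \infty$ as $m \to \infty$; by Theorem \ref{thm_ksparse} and the discussion following it, the dense encoding matrix whose existence we established has $w_j = \Theta(m)$, so this hypothesis is met in the regime of interest.

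Next I would fix a model for the plaintext source: distinct files $f_1, \dots, f_m$ are independent, and for each bit position $t \in \{1,\dots,Q\}$ the $t$-th bit $X_i^{(t)}$ of $f_i$ satisfies $|1 - 2\,\mathbb{P}[X_i^{(t)} = 1]| \le 1-\delta$ for some fixed $\delta > 0$, i.e. no individual file bit is a deterministic constant (a \emph{non-degenerate source}, which holds for any source with positive per-bit entropy). Fixing $t$, the $t$-th bit of $b_j$ equals $Z = \bigoplus_{i \in S_j} X_i^{(t)}$, and by exactly the induction already carried out in the proof of Lemma \ref{lem_fix_x} (with the common parameter $\sigma$ replaced by the per-file biases) one obtains
\[
 \mathbb{P}[Z = 1] = \frac{1}{2}\Bigl(1 - \prod_{i \in S_j} \bigl(1 - 2\,\mathbb{P}[X_i^{(t)} = 1]\bigr)\Bigr).
\]
The product is bounded in absolute value by $(1-\delta)^{w_j}$, which tends to $0$ since $w_j \to \infty$; hence $\mathbb{P}[Z=1] \to \tfrac12$. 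As $t$ and $j$ were arbitrary, the marginal distribution of every bit of every coded file converges to the uniform distribution on $\mathbb{F}_2$.

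To upgrade this to convergence of the \emph{joint} distribution of the $Q$ bits of $b_j$ (and, if one wishes, jointly across several coded files), I would pass to the Walsh--Hadamard transform on the relevant $\mathbb{F}_2$-vector space: for any nonzero frequency vector $\mathbf{v}$, independence of the files gives $\mathbb{E}[(-1)^{\langle \mathbf{v}, b_j\rangle}] = \prod_{i \in S_j} \mathbb{E}[(-1)^{\langle \mathbf{v}, f_i\rangle}]$, each factor has modulus at most $1-\delta$ for a non-degenerate source, so the product vanishes as $w_j \to \infty$; vanishing of all nontrivial Fourier coefficients is precisely convergence to the uniform distribution.

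The main obstacle is not the algebra --- it is the same piling-up induction as in Lemma \ref{lem_fix_x} --- but identifying and justifying the correct hypothesis on the plaintext source. One genuinely needs the per-file bit biases to stay uniformly bounded away from $\pm 1$; without such a non-degeneracy assumption (for example, if a large fraction of files were all-zero in some bit position) the XOR could remain biased and the conclusion would fail, so I would state this assumption explicitly. A secondary point is that the weights $w_j$ are themselves random, being read off from the random decoding matrix $\mathbf{D}$; I would dispatch this by conditioning on $\mathbf{A}$ and invoking Theorem \ref{thm_ksparse}, which yields $w_j = \Theta(m)$ on an event of probability approaching one, so the convergence above holds along that high-probability event.
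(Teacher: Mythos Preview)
The paper does not actually prove this lemma: its entire proof reads ``This proof is given as Lemma~4 in \cite{kiskani2017secure3}.'' Your piling-up argument is the standard way to establish such a statement and is almost certainly what the cited reference contains, so in that sense your approach matches. Your write-up is in fact more complete than what appears here: you make explicit the independence and non-degeneracy hypotheses on the plaintext source that the lemma as stated leaves implicit, and you correctly observe that without bounding the per-bit biases away from $\pm 1$ the conclusion can fail. The Fourier upgrade to the joint distribution is a nice addition beyond what the bare statement demands.
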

\begin{proof}
 This proof is given as Lemma 4 in \cite{kiskani2017secure3}.  
\end{proof}
We will now use equation \eqref{eq_linear_pseudo} to analyze the security of our approach. Let $\mathbf{f}_c = \mathbf{D}_c \mathbf{c}$ and $\mathbf{f}_u = \mathbf{D}_u \mathbf{u}$ denote the contributions from the cloud and the user for data retrieval. Then using module two addition we can re-arrange equation \eqref{eq_linear_pseudo} to
\begin{align}
 \mathbf{f}_c = \mathbf{f} + \mathbf{f}_u.
 \label{eq_shannon}
\end{align}
Cloud sends $\mathbf{f}_c$ from the cloud to the user and this information is subject to eavesdropping. Equation \eqref{eq_shannon} is similar to Shannon cipher problem \cite{shannon1949communication}. In Shannon cipher, an encoding function $\mathfrak{e}: \mathbb{M} \times \mathbb{K} \to \mathbb{C}$ is mapping a message $\mathfrak{M} \in \mathbb{M}$ and a key $\mathfrak{K} \in \mathbb{K}$ to a codeword $\mathfrak{C} \in \mathbb{C}$. In this problem, the message $\mathbf{f}$ is encoded by the key $\mathbf{f}_u$ and the ciphertext $\mathbf{f}_c$ is created and transmitted through the channel. We want to examine the criteria for achieving perfect secrecy using the above cipher. The following theorem provides the necessary and sufficient condition to obtain perfect  secrecy in Shannon cipher system.
\begin{thm}{\em 
 If $|\mathbb{M}|=|\mathbb{K}|=|\mathbb{C}|$, a coding scheme achieves perfect secrecy if and only if 
 \begin{itemize}
  \item For each pair $(\mathfrak{M}, \mathfrak{C}) 
  \in (\mathbb{M} \times \mathbb{C})$, there exists a unique key 
  $\mathfrak{K} \in \mathbb{K}$ such that $\mathfrak{C} = 
  \mathfrak{e}(\mathfrak{M},\mathfrak{K})$.
  \item The key $\mathfrak{K}$ is uniformly distributed in $\mathbb{K}$.
 \end{itemize}
 }\label{thm_shannon}
\end{thm}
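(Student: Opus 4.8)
The plan is to prove the two directions separately, working from the standard characterization of perfect secrecy as statistical independence of the plaintext $\mathfrak{M}$ and the ciphertext $\mathfrak{C}$, i.e. $\mathbb{P}[\mathfrak{C} = C \mid \mathfrak{M} = M] = \mathbb{P}[\mathfrak{C} = C]$ for all $M \in \mathbb{M}$ and $C \in \mathbb{C}$, under the standing assumptions of the Shannon cipher model: the key $\mathfrak{K}$ is independent of $\mathfrak{M}$, every message and every ciphertext occurs with positive probability, and for each fixed key the map $\mathfrak{e}(\cdot, \mathfrak{K})$ is injective so that decryption is well defined. I would state these assumptions explicitly at the outset.

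For sufficiency, I would assume the two bulleted conditions and fix an arbitrary pair $(M, C)$. Condition~1 supplies a unique key $K_{M,C}$ with $\mathfrak{e}(M, K_{M,C}) = C$, so that, using independence of $\mathfrak{K}$ and $\mathfrak{M}$ and then the uniformity of $\mathfrak{K}$ from Condition~2,
\begin{align}
\mathbb{P}[\mathfrak{C} = C \mid \mathfrak{M} = M] = \mathbb{P}[\mathfrak{K} = K_{M,C}] = \frac{1}{|\mathbb{K}|}.
\end{align}
Since the right-hand side is independent of $M$, averaging over $M$ with respect to the law of $\mathfrak{M}$ yields $\mathbb{P}[\mathfrak{C} = C] = 1/|\mathbb{K}|$ as well, hence $\mathbb{P}[\mathfrak{C} = C \mid \mathfrak{M} = M] = \mathbb{P}[\mathfrak{C} = C]$ and perfect secrecy follows.

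For necessity, I would start from perfect secrecy and first recover Condition~1. Fixing any message $M$, the reachable set $\{\mathfrak{e}(M, K) : K \in \mathbb{K}\}$ has at most $|\mathbb{K}| = |\mathbb{C}|$ elements; if some $C \in \mathbb{C}$ were unreachable from $M$ then $\mathbb{P}[\mathfrak{C} = C \mid \mathfrak{M} = M] = 0$, contradicting perfect secrecy because $\mathbb{P}[\mathfrak{C} = C] > 0$. Thus $K \mapsto \mathfrak{e}(M, K)$ is onto $\mathbb{C}$, and being a surjection between finite sets of equal size it is a bijection, which is precisely the existence-and-uniqueness assertion of Condition~1. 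For Condition~2, fix a ciphertext $C$, let $K_{M,C}$ be the unique key with $\mathfrak{e}(M, K_{M,C}) = C$, and note that independence together with the computation above gives
\begin{align}
\mathbb{P}[\mathfrak{K} = K_{M,C}] = \mathbb{P}[\mathfrak{C} = C \mid \mathfrak{M} = M] = \mathbb{P}[\mathfrak{C} = C],
\end{align}
a value not depending on $M$. The map $M \mapsto K_{M,C}$ is injective, since $K_{M_1,C} = K_{M_2,C} = K$ forces $\mathfrak{e}(M_1, K) = C = \mathfrak{e}(M_2, K)$ and hence $M_1 = M_2$ by injectivity of $\mathfrak{e}(\cdot, K)$, so it is a bijection because $|\mathbb{M}| = |\mathbb{K}|$. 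Consequently each $K \in \mathbb{K}$ equals $K_{M,C}$ for exactly one $M$, giving $\mathbb{P}[\mathfrak{K} = K] = \mathbb{P}[\mathfrak{C} = C]$ for all $K$, which forces $\mathbb{P}[\mathfrak{K} = K] = 1/|\mathbb{K}|$.

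The routine half is sufficiency; the part demanding the most care is necessity, where the real content is converting the information-theoretic independence into the two simultaneous combinatorial facts that encryption is a bijection in the key for each fixed message and a bijection in the message for each fixed key. The one genuine subtlety I expect is the existence half of Condition~1: it does not follow from cardinality alone and must be extracted from perfect secrecy together with the assumption that every ciphertext in $\mathbb{C}$ has positive probability, so I would be careful to invoke that hypothesis where the contradiction argument is made.
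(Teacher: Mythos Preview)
Your proof is correct and is the standard argument for Shannon's characterization of perfectly secret ciphers. The paper does not actually prove this theorem itself; it simply cites section~3.1 of \cite{bloch2011physical}, so there is no in-house argument to compare against. Your write-up is in fact more explicit than what the paper offers: you spell out the standing hypotheses (independence of key and message, positive probability of every ciphertext, injectivity of $\mathfrak{e}(\cdot,K)$) and use them exactly where they are needed, which is precisely the content one would find in the cited reference.
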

\begin{proof}
 The proof can be found in section 3.1 of \cite{bloch2011physical}.  
\end{proof}
We will use Theorem \ref{thm_shannon} to prove that our approach can achieve asymptotic perfect secrecy. 

\begin{thm}{\em 
If $h$ grows with $m$ such that $m <2^{h}$, then the proposed encoding scheme provides asymptotic perfect secrecy against any eavesdropper wiretapping the communication between the cloud and the user.
}\label{thm_secrecy}
\end{thm}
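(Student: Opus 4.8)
The plan is to recognise the retrieval phase as an additive one-time-pad cipher and to verify the hypotheses of Theorem~\ref{thm_shannon}. Following \eqref{eq_shannon} I identify the ciphertext with $\mathbf{f}_c = \mathbf{D}_c\mathbf{c}$ (the only quantity the cloud sends over the wiretapped link), the key with the user contribution $\mathbf{f}_u = \mathbf{D}_u\mathbf{u}$, and the message with the files analysed coordinate by coordinate, so that the $i$-th instance has message $f_i$, key $(\mathbf{f}_u)_i$ and ciphertext $(\mathbf{f}_c)_i = f_i + (\mathbf{f}_u)_i$ over $\mathbb{F}_2$. It then suffices to check, for each $i$, the two conditions of Theorem~\ref{thm_shannon}; since the required uniformity of the key will only hold in the limit, this yields secrecy asymptotically, as the statement claims.

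I would first dispatch the cardinality hypothesis and the first condition of Theorem~\ref{thm_shannon}. Coordinate-wise $\mathbb{M}=\mathbb{K}=\mathbb{C}=\mathbb{F}_{2^Q}$ as sets, so $|\mathbb{M}|=|\mathbb{K}|=|\mathbb{C}|$ holds as soon as every file receives a non-trivial pad, i.e. as soon as the $i$-th row of $\mathbf{D}_u$ is non-zero; the hypothesis $m<2^h$ is exactly what guarantees that the $h$ locally-stored coded symbols admit $m$ distinct non-zero $\mathbb{F}_2$-combinations, one per file. The first condition is then immediate from the group structure of $\mathbb{F}_{2^Q}$: given $f_i$ and $(\mathbf{f}_c)_i$, the only admissible key is $(\mathbf{f}_u)_i=(\mathbf{f}_c)_i+f_i$.

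The core of the argument is the second condition: that $(\mathbf{f}_u)_i$ is asymptotically uniform on $\mathbb{F}_{2^Q}$. The idea is to write $\mathbf{u}=\mathbf{A}_u\mathbf{f}$, where $\mathbf{A}_u$ is the $h\times m$ block of the full-rank encoding matrix $\mathbf{A}$ producing the user's coded symbols, so that $(\mathbf{f}_u)_i=\big((\mathbf{D}_u)_i\mathbf{A}_u\big)\mathbf{f}$ is itself an encoded file with coefficient vector $(\mathbf{D}_u)_i\mathbf{A}_u$. By the result of Section~\ref{sec_dense_enc} the rows of $\mathbf{A}$, hence of $\mathbf{A}_u$, are dense with $\Theta(m)$ ones, and since $(\mathbf{D}_u)_i$ is a non-zero binary vector, $(\mathbf{D}_u)_i\mathbf{A}_u$ is a non-empty $\mathbb{F}_2$-sum of such dense rows whose weight still grows with $m$. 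Hence $(\mathbf{f}_u)_i$ combines $\Theta(m)$ files; Lemma~\ref{lem_uniform_encoded} forces its bit distribution towards the uniform one, and applying this to each of the $Q$ bit planes makes $(\mathbf{f}_u)_i$ tend to the uniform law on $\mathbb{F}_{2^Q}$. Theorem~\ref{thm_shannon} then certifies perfect secrecy of $f_i$ in the limit, i.e. $I(f_i;\mathbf{f}_c)\to 0$, which is the assertion.

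The main obstacle is this uniformity step, which hides two subtleties. First I must rule out cancellation: the $\mathbb{F}_2$-sum $(\mathbf{D}_u)_i\mathbf{A}_u$ of dense rows of $\mathbf{A}$ must stay of weight $\Theta(m)$ rather than collapse to a sparse vector, which I expect to follow from $\mathbf{D}$ (hence $\mathbf{A}$) being random, so that distinct rows behave generically and their sums remain dense with high probability. Second, Lemma~\ref{lem_uniform_encoded} yields only the marginal uniformity of a single pad coordinate, whereas an eavesdropper observes all of $\mathbf{f}_c$; the statement as I read it is file-by-file, but strengthening it to joint secrecy of all $m$ files needs joint asymptotic uniformity of the pads $(\mathbf{f}_u)_1,\dots,(\mathbf{f}_u)_m$, and that is where the distinctness of the rows of $\mathbf{D}_u$ granted by $m<2^h$ is used again. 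Finally, because uniformity is attained only as $m\to\infty$, the exact equivalence of Theorem~\ref{thm_shannon} delivers only asymptotic perfect secrecy, consistent with the claim.
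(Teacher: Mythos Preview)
Your proposal follows essentially the same route as the paper: cast \eqref{eq_shannon} as a one-time-pad, use $m<2^{h}$ to guarantee a distinct non-zero pad for each file, and invoke Lemma~\ref{lem_uniform_encoded} for the asymptotic uniformity of $\mathbf{f}_u$ before appealing to Theorem~\ref{thm_shannon}. The two obstacles you honestly flag---possible weight cancellation in $(\mathbf{D}_u)_i\mathbf{A}_u$ and marginal versus joint uniformity of the pads---are not resolved in the paper's own proof either, which simply asserts that a unique sum of asymptotically uniform encoded files is again uniform.
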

\begin{proof}
We formulated this problem as a Shannon cipher system assuming that $\mathfrak{M}=\mathbf{f}$, $\mathfrak{K}=\mathbf{f}_u$, and $\mathfrak{C} = \mathbf{f}_c$. The condition $m < 2^{h}$ ensures that a unique key exists for each requested message. Therefore, for any pair $(\mathfrak{m}, \mathfrak{C}) \in (\mathbb{M}, \mathbb{C})$, a unique key $\mathfrak{K} \in \mathbb{K}$ exists such that $\mathfrak{C} = \mathfrak{m} + \mathfrak{K}$. Further, we are guaranteed to have $|\mathbb{M}|=|\mathbb{K}|=|\mathbb{C}|$.
Notice that the key  $\mathfrak{K}=\mathbf{f}_u$ belongs to the set of all possible bit strings with $Q$ bits. Lemma \ref{lem_uniform_encoded} proves that each encoded file is uniformly distributed among all $Q$-bit strings. Hence each key which is a unique summation of such encoded files is uniformly distributed among the set of all $Q$-bit strings. In other words, regardless of the distribution of the bits in files, $\mathbf{f}_u$ can be any bit string with equal probability for large values of $m$. Therefore, the conditions in Theorem \ref{thm_shannon} are met and perfect secrecy is achieved.
\end{proof}

\section{Complexity Analysis and Simulations}
\label{sec_sim}

In this section, we will compare the complexity of our proposed algorithm with the complexity of AES algorithm which is the standard cryptographic algorithm adopted by NIST in the U.S. and is a part of TLS and HTTPS protocols. AES is a block cipher with a block length of 128 bits. AES encryption consists of 10 rounds of processing for a 128-bit key with 128 bit block ciphers. \textcolor{black}{Initially, the key needs to undergo a KeyExpansion phase in which 10 different keys are created from the original key. Among many other operations, the KeyExpansion phase needs at least 50 byte XOR operations to create new keys. After this phase, the AES algorithm runs 10 rounds of operations each with a separate key that is generated in the KeyExpansion phase.} Each round of processing includes one single-byte based substitution step ({\em SubBytes}), a row-wise permutation step ({\em ShiftRows}), a column-wise mixing step ({\em MixColumns}),  and the addition of the round key ({\em AddRoundKey}). The order in which these four steps are executed is different for encryption and decryption. In the AddRoundKey step the input text is XORed with the key and the same thing happens during decryption. The goal of ShiftRows and MixColumns steps is to scramble the byte order inside each 128-bit block. All of these steps require a lot of operations which result in a large number of sequential operations either for decoding and encoding. Some of these operations are summarized in Table \ref{table_sim}. 

Table \ref{table_sim} compares the number of operations in our proposed algorithm with the number of operations in the AES algorithm to justify the significant improvement of our approach over the AES algorithm in terms of computational complexity. \textcolor{black}{As can be seen from this table, at least a total of 5268 bit XOR operations are performed on a 128 block. Therefore, the number of per-bit XOR operations will be equal to 41.125 bit XOR operations in AES  decryption algorithm. Notice that all of these steps in AES algorithm are done in sequential order and this will induce significant delays on AES encryption while out technique requires one-time XOR operation at the data center.}

Sparse decoding in our method allows us to perform the decoding with $m \frac{1}{2} (1-\sigma)$ XOR operations. Therefore for $m=128$, our approach requires $64(1-\sigma)$ XOR operations for decoding. This table clearly shows that our proposed algorithm significantly reduces the number of per-bit XOR operations from 41.125 to  4 XOR operations in case when $\sigma = 15/16$. It also does not require any other operations. \textcolor{black}{Also, our proposed decoding operation does not inflict significant delay on the system.}
\begin{table}[http]
    \centering
    \caption{Complexity comparison}
    \begin{tabular}{ |l|l|}
    \hline
    Algorithm & No. of operations\\
    \hline
    \hline
    KeyExpansion step in AES Algorithm & 40 byte row shifts\\
    & 50 byte table look-ups \\
    & 50 byte XOR operations\\
    \hline
    AddRoundKey step in AES Algorithm & 16 byte XOR operations\\
    \hline
    SubBytes step in AES Algorithm  & 16 byte table look-ups \\
    \hline 
    ShiftRows step in AES Algorithm  & 12 byte row shifts \\
    \hline 
    MixColumns step in AES Algorithm & 48 byte XOR operations\\
    & 64 byte table look-ups \\
    \hline
    Total number of byte XOR &\\
    operations in AES decryption& 658 byte XOR operations\\
    \hline
    Total number of bit XOR &\\
    operations in AES decryption& 5268 bit XOR operations\\
    \hline
    Total number of per-bit  XOR &\\  
    operations in AES & 41.125 XOR operations\\
    \hline
    Total number of per-bit XOR &\\ operations  in our
    sparse algorithm & \\ 
    with $\sigma = {15}/{16} = 0.9375$ & 4  XOR operations \\
    \hline
    \end{tabular}
    \label{table_sim}
\end{table}

Figure \ref{fig_avg_density} shows our simulation results. In this figure, we have plotted the average density of the encoding matrix for different number of files $m$ and different values of sparsity index $\sigma$. The case of $\sigma = 0.5$ is equivalent to random uniform decoding matrices which is not sparse. As we can see form this plot, when $\sigma = 0.5$ the average Hamming weight of the encoding matrix is equal to $\frac{m}{2}$ for different values of $m$. This is intuitively expected due to the symmetry and non-sparsity of the problem. When $\sigma$ becomes larger than $0.5$, the decoding matrix $\mathbf{D}$ becomes sparser which results in better decoding complexity and lower number of XOR operations. As can be seen from Figure \ref{fig_avg_density}, the average density of the encoding matrix becomes less than $\frac{m}{2}$ in this case. However, the simulation results confirm our theoretical results that regardless of the value of $\sigma$, i.e. regardless of the degree of sparsity of the decoding matrix when $m$ goes to infinity, the average Hamming weight of the encoding matrix approaches the dense value of $\frac{m}{2}$. Figure \ref{fig_avg_density} shows that for $\sigma = 0.6$ and $m>10$, or  when $\sigma = 0.7$ and $m>40$, or when $\sigma = 0.8$  and $m>60$, then the average density of the encoding matrix is close to $\frac{m}{2}$. Further, as can be seen from this plot, when $\sigma = 0.9$ the average density of the encoding matrix slowly approaches $\frac{m}{2}$ as $m$ goes to infinity.

\begin{figure}
    \center
      \includegraphics[scale=0.55,angle=0]{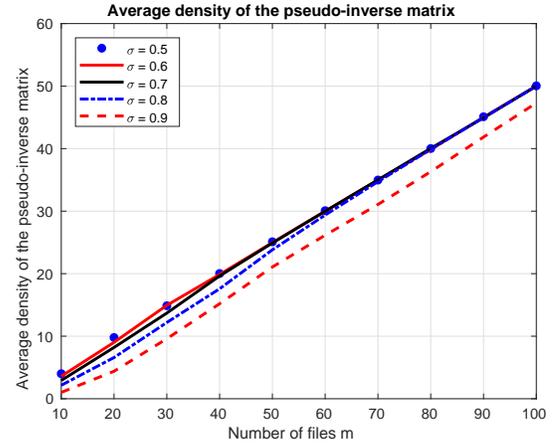}\\
      \caption{Average density of the encoding matrix versus the number of files for different sparse decoding matrices. 
      }
    \label{fig_avg_density}
\end{figure}

\section{Conclusions}
\label{sec_conc}
In this paper, we have proposed a radically different approach for providing secure low complexity storage solutions for cloud systems. Our method is specifically suited for big data applications when a large number of archival files needs to be stored on the cloud. We proved that our method is capable of achieving asymptotic perfect secrecy and through simulations and numerical studies  have shown that it is computationally  more efficient than today's cryptographic algorithms which are not well optimized to handle very large number of files.

We do not claim that this approach can replace encryption for all applications. For example, the decoding instruction that is exchanged between the user and cloud can be encrypted first before transmission. However, this approach can be a good alternative for archival data stored in cloud storage systems. Future  work will focus on studying the security properties of this approach for finite values of $m$. Extension of this work to other types of data is also desirable. This work was mainly focused on decoding properties of these codes but it may be useful to investigate low encoding complexity  secure codes. 

\section*{Acknowledgement}
The authors would like to acknowledge the generous support of Huawei Technologies in funding this research. This research is supported by Huawei Technology North America through award number TETF-9402566.

\bibliographystyle{unsrt}
\bibliography{conf1_bib} 
\end{document}